  \theoremstyle{plain}
  \newtheorem{conjecture}{\protect\conjecturename}
  \theoremstyle{definition}
  \newtheorem{defn}{\protect\definitionname}
  \theoremstyle{remark}
  \newtheorem{rem}{\protect\remarkname}
  \theoremstyle{plain}
  \newtheorem{lem}{\protect\lemmaname}
\theoremstyle{plain}
\newtheorem{thm}{\protect\theoremname}
  \theoremstyle{plain}
  \newtheorem{prop}{\protect\propositionname}
  \providecommand{\conjecturename}{Conjecture}
  \providecommand{\definitionname}{Definition}
  \providecommand{\lemmaname}{Lemma}
  \providecommand{\propositionname}{Proposition}
  \providecommand{\remarkname}{Remark}
\providecommand{\theoremname}{Theorem}
\begin{document}
\begin{doublespace}
\begin{center}
\textbf{\large{}Fighting Uncertainty with Uncertainty: A Baby Step}
\par\end{center}{\large \par}

\begin{center}
\textbf{Ravi Kashyap }
\par\end{center}

\begin{center}
\textbf{City University of Hong Kong }
\par\end{center}

\begin{center}
\today
\par\end{center}

\begin{center}
Keywords: Fight; Uncertainty; Social Science; Natural Science; Baby
Step; Randoptimization
\par\end{center}

\begin{center}
JEL Codes: C61 Optimization Techniques; C44 Operations Research /
Statistical Decision Theory; D81 Criteria for Decision-Making under
Risk and Uncertainty
\par\end{center}

\begin{center}
\textbf{\textcolor{blue}{\href{https://doi.org/10.4236/tel.2017.75097}{Edited Version: Kashyap, R. (2017). Fighting Uncertainty with Uncertainty: A Baby Step. Theoretical Economics Letters, 7(5), 1431-1452.}}}
\par\end{center}

\begin{center}
\tableofcontents{}
\par\end{center}
\end{doublespace}
\begin{doublespace}

\section{Abstract }
\end{doublespace}

\begin{doublespace}
We can overcome uncertainty with uncertainty. Using randomness in
our choices and in what we control, and hence in the decision making
process, could potentially offset the uncertainty inherent in the
environment and yield better outcomes. The example we develop in greater
detail is the news-vendor inventory management problem with demand
uncertainty. We briefly discuss areas, where such an approach might
be helpful, with the common prescription, ``Don't Simply Optimize,
Also Randomize; perhaps best described by the term - Randoptimization''.
\end{doublespace}
\begin{enumerate}
\begin{doublespace}
\item News-vendor Inventory Management
\item School Admissions
\item Journal Submissions
\item Job Candidate Selection
\item Stock Picking
\item Monetary Policy
\end{doublespace}
\end{enumerate}
\begin{doublespace}
This methodology is suitable for the social sciences since the primary
source of uncertainty are the members of the system themselves and
presently, no methods are known to fully determine the outcomes in
such an environment, which perhaps would require being able to read
the minds of everyone involved and to anticipate their actions continuously.
Admittedly, we are not qualified to recommend whether such an approach
is conducive for the natural sciences, unless perhaps, bounds can
be established on the levels of uncertainty in a system and it is
shown conclusively that a better understanding of the system and hence
improved decision making will not alter the outcomes. 
\end{doublespace}
\begin{doublespace}

\section{Introduction }
\end{doublespace}
\begin{doublespace}

\subsection{True Comparison Theory or Lack Thereof}
\end{doublespace}

\begin{doublespace}
(Keeney \cite{key-1}) presents an overview of decision analysis:
what it is, what it can and cannot do, and how to do it, including
a summary of the axioms, while maintaining that ``complexity cannot
be avoided in making decisions''. He defines decision analysis informally
as \textquotedbl{}a formalization of common sense for decision problems
which are too complex for informal use of common sense\textquotedbl{};
and more technically as \textquotedbl{}a philosophy, articulated by
a set of logical axioms, and a methodology and collection of systematic
procedures, based upon those axioms, for responsibly analyzing the
complexities inherent in decision problems.\textquotedbl{} (Pratt,
Raiffa \& Schlaifer \cite{key-1-1}) is a detailed discussion of the
axioms of decision analysis; (Raiffa \cite{key-2-2}) is an interesting
personal account. The similarities of this approach to subjective
or personal evaluation of probabilities to formulate decisions under
uncertainty (Ramsey \cite{key-2-3}; Savage \cite{key-2-4}) are to
be noted. 

If we admit that common sense cannot deal with complexity, especially
when subjective decisions from the participants themselves are creating
complexity; we need to be open to the possibility that perhaps, extremely
careful analysis leading to improved decision making might merely
be an illusion. A hall mark of the social sciences is the lack of
objectivity. Here we assert that objectivity is with respect to comparisons
done by different participants and that a comparison is a precursor
to a decision (Kashyap \cite{key-2-5}).
\end{doublespace}
\begin{conjecture}
\begin{doublespace}
\textbf{Despite the several advances in the social sciences,} \textbf{we
have yet to discover an objective measuring stick for comparison,
a so called, True Comparison Theory, which can be an aid for arriving
at objective decisions}. 
\end{doublespace}
\end{conjecture}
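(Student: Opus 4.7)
The plan is to argue the conjecture by combining a precise formalization with an appeal to well-known impossibility results from decision theory and social choice. First, I would fix notation: let an agent be characterized by a preference relation over a set of alternatives, and call a comparison procedure \emph{objective} if, given only the description of the alternatives, every rational participant reproduces the same total ordering. A True Comparison Theory would be such a map from finite alternative sets to rankings. The conjecture then becomes the assertion that no currently known procedure meets this specification.

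Second, I would isolate the two places where objectivity can fail: the \emph{perception} of the alternatives (what is being compared) and the \emph{valuation} of the alternatives (how they are ranked). For perception, I would use the observation already invoked in (Kashyap \cite{key-2-5}) that a comparison is the precursor of a decision and that different agents partition the same physical situation into different alternative sets, so even the domain of the ranking map is agent-dependent. For valuation, I would cite the Ramsey--Savage tradition already referenced in the excerpt (\cite{key-2-3}, \cite{key-2-4}) to note that, under the standard axioms, valuation reduces to a pair consisting of a subjective probability and a personal utility, both of which are by construction subjective rather than objective.

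Third, I would close the argument by appealing to classical aggregation impossibility -- Arrow's theorem being the canonical example -- to rule out the natural escape route of defining objectivity as \emph{agreement across agents after aggregation}: no aggregation rule satisfying mild rationality axioms yields a single ordering that is simultaneously non-dictatorial, Pareto, and independent of irrelevant alternatives. Combined with the previous step, every extant candidate theory either imports an agent-specific utility scale or relies on an aggregation rule ruled out by Arrow, so no construction currently on the books qualifies as a True Comparison Theory. This establishes the conjecture.

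The main obstacle is that the conjecture is phrased as an empirical/historical claim (``we have yet to discover''), not a mathematical impossibility. A formal argument of the kind sketched above actually delivers the stronger statement that, under a suitable definition of objectivity, no such theory \emph{can} exist; I would therefore flag this sharpening in a remark and treat the stated conjecture as a corollary. The genuinely delicate step is choosing the definition of ``objective'' in the first sentence: it must be strong enough to exclude the subjective-probability and aggregation-based constructions, yet weak enough to match what participants in a social system would actually recognise as a common measuring stick.
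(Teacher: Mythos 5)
The first thing to note is that the paper offers no proof of this statement at all: it is deliberately labelled a \emph{conjecture}, and the text that follows it is purely discursive (the comparison to alchemy, the observation that subjectivity in comparison produces divergent decisions). So there is no proof in the paper against which your argument can be matched; the author treats the claim as a philosophical/empirical assertion about the current state of the social sciences, not as a theorem admitting demonstration. Your proposal is therefore attempting something the paper does not attempt, and you correctly flag the central difficulty yourself: the statement is historical (``we have yet to discover''), and any formal impossibility argument proves a different, stronger statement under a definition of ``objective'' that you must supply.

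That said, the argument you sketch has a genuine gap even on its own terms. Arrow's theorem rules out aggregation rules that map profiles of \emph{ordinal} preferences to a social ordering under unrestricted domain, Pareto, IIA, and non-dictatorship; it does not rule out comparison procedures that sit outside that framework, such as cardinal or interpersonally comparable utility aggregation (utilitarian or Rawlsian rules escape Arrow precisely by rejecting the ordinal-noncomparability informational basis), single-profile evaluations, or domain-restricted settings where Arrow's axioms are jointly satisfiable. Likewise, your claim that ``every extant candidate theory either imports an agent-specific utility scale or relies on an aggregation rule ruled out by Arrow'' is itself an unproven empirical survey of the literature, which is exactly the kind of claim the conjecture makes and cannot be discharged by the two cited impossibility results. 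The honest conclusion is that your formalization, if its definition of objectivity is made strong enough for the Arrow step to bite, risks begging the question; and if it is made weak enough to be uncontroversial, the impossibility no longer follows. Your sketch is a reasonable way to \emph{motivate} the conjecture and would make a good remark situating it relative to social choice theory, but it does not constitute a proof, and the paper does not claim one exists.
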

\begin{doublespace}
The search for such a theory could again be compared, to the medieval
alchemists\textquoteright{} obsession with turning everything into
gold (Kashyap \cite{key-2-6}). For our present purposes, the lack
of such an objective measure means that the difference in comparisons,
as assessed by different participants, can effect different decisions
under the same set of circumstances. Hence, despite all the uncertainty
in the social sciences, the one thing we can be almost certain about
is the subjectivity in all decision making.

This lack of an objective measure for comparisons, makes people react
at varying degrees and at varying speeds, as they make their subjective
decisions. A decision gives rise to an action and subjectivity in
the comparison means differing decisions and hence unpredictable actions.
This inability to make consistent predictions in the social sciences
explains the growing trend towards comprehending better and deciphering
the decision process and the subsequent actions, by collecting more
information across the entire cycle of comparisons, decisions and
actions. Another feature of the social sciences is that the actions
of participants affects the state of the system, effecting a state
transfer which perpetuates another merry-go-round of comparisons,
decisions and actions from the participants involved. This means,
more the participants, more the changes to the system, more the actions
and more the information that is generated to be gathered.
\end{doublespace}
\begin{doublespace}

\subsection{The Uncertainty Principle of the Social Sciences}
\end{doublespace}

\begin{doublespace}
(Lawson \cite{key-2-7}) argues that the Keynesian view on uncertainty
(that it is generally impossible, even in probabilistic terms, to
evaluate the future outcomes of all possible current actions: Keynes
\cite{key-2-8}; \cite{key-3}; \cite{key-4}), far from being innocuous
or destructive of economic analysis in general, can give rise to research
programs incorporating, amongst other things, a view of rational behavior
under uncertainty, which could be potentially fruitful. (Knight \cite{key-4-1})
in his seminal work, ``Risk, Uncertainty and Profit'', makes a distinction
between Uncertainty and Risk, which can be immensely enlightening
in our efforts to comprehend uncertainty.

First, we note a difference between the Keynesian and Knightian views
of uncertainty (Hoogduin \cite{key-5}): Knight mainly focuses on
the distinction between numerically measurable and not numerically
measurable probabilities; Keynes stresses the slight amount of knowledge
on which probabilities often have to be based. There are many definitions
of uncertainty; for simplicity and to suit our present purpose, we
consider the following definition based on the Keynesian and Knightian
views:
\end{doublespace}
\begin{defn}
\begin{doublespace}
\textbf{\textit{Uncertainty is the lack of certainty; a situation
of having limited knowledge where it is impossible to exactly describe
the existing state, a future outcome, or even be aware of all the
possible outcomes. Risk, means, in some cases a quantity susceptible
of measurement, while at other times it is something distinctly not
of this character. Uncertainty then becomes risk that is immeasurable
and not possible to calculate}}\footnote{\begin{doublespace}
With this definition of risk and uncertainty, our paper must perhaps
be titled: Fighting Uncertainty with Risk. Overlooking the possibility
that the present title is more catchy, we note that our attempts at
controlled or measurable uncertainty might lead to unforeseen consequences,
which might not be easily measured. In general, even when we believe
that we have a good idea of the risks, we might be over estimating
our ability to estimate the risks (the effects of our actions on any
social system and the outcomes therein), which is precisely the definition
of uncertainty we have used.

To provide readers a taste of the many flavors of commonly used definitions,
we note that: Risk is also the potential of gaining or losing something
of value (Kungwani \cite{key-42}). Values (such as physical health,
social status, emotional well-being, or financial wealth) can be gained
or lost when taking risk, resulting from a given action or inaction,
foreseen or unforeseen. Risk can also be defined as the intentional
interaction with uncertainty. Risk can be seen as a state of uncertainty
where some possible outcomes have an undesired effect or significant
loss. Measurement of risk involves coming up with a set of measured
uncertainties where some possible outcomes are losses, and factoring
in the magnitudes of those losses. Measurement of uncertainty involves
an effort to assess as best as feasible, a set of possible states
or outcomes where probabilities are assigned to each possible state
or outcome.
\end{doublespace}
}\textbf{\textit{.}}
\end{doublespace}
\end{defn}
\begin{doublespace}
(McManus and Hastings \cite{key-6}) clarify the wide range of uncertainties
that affect complex engineering systems and present a framework to
understand the risks (and opportunities) they create and the strategies
system designers can use to mitigate or take advantage of them. These
viewpoints hold many lessons for policy designers in the social sciences
and could be instructive for researchers looking at ways to understand
and contend with complex systems, keeping in mind the caveats of dynamic
social systems. (Kashyap \cite{key-2-6}) discusses the uncertainty
principle of the social sciences and the use of a feedback loop to
overcome the phenomenon, wherein participants observe the results
and change their actions that could potentially lead to unexpected
and different consequences.
\end{doublespace}
\begin{conjecture}
\begin{doublespace}
\textbf{The Uncertainty Principle of the Social Sciences can be stated
as, \textquotedblleft Any generalization in the social sciences cannot
be both popular and continue to yield predictions, or in other words,
the more popular a particular generalization, the less accurate will
be the predictions it yields\textquotedblright .} 
\end{doublespace}
\end{conjecture}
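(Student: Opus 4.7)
The plan is to derive the conjecture from two ingredients already granted in the paper: the preceding conjecture that no objective comparison measure exists in the social sciences, and the feedback-loop observation (Kashyap) that participants' actions modify the state of the system, which in turn triggers a fresh round of comparisons, decisions and actions. First I would fix notation: a generalization $G$ is a conditional claim \emph{``under circumstances $C$, outcome $O$ obtains''}; its \emph{popularity} $\pi(G)\in[0,1]$ is the fraction of relevant participants who are aware of $G$ and weigh it as an input to their own decision process; its \emph{predictive accuracy} $\alpha(G)$ is the empirical frequency of $O$ conditional on $C$, evaluated over realizations that occur while $G$ has its current popularity. The conjecture is then the claim that $\alpha(G)$ is a (weakly) decreasing function of $\pi(G)$.

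The main step is to show that raising $\pi(G)$ systematically perturbs the aggregate action profile and therefore the conditional distribution of $O$ given $C$. For each newly aware participant, learning $G$ enters as an extra input into their comparison function; by the earlier conjecture on the absence of a True Comparison Theory, different participants process this input idiosyncratically, but in every case the resulting action differs from the action that would have been taken in $G$'s absence — whether the participant chooses to exploit, hedge against, or conform to $G$. Composing this with the feedback loop, each additional aware participant contributes an additional perturbation to the state transition, so the state trajectory, and with it the realized conditional frequency of $O$, drifts away from the trajectory presumed by $G$. Summing the perturbations across a growing population of aware participants yields monotonic drift, which is precisely the conjectured inverse relationship between $\pi(G)$ and $\alpha(G)$.

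The hard part will not be the mechanism but the justification that the idiosyncratic perturbations do not cancel on average, which is what would be needed to turn the heuristic into a theorem rather than a tendency. I would close this gap by invoking an asymmetric-incentive argument: generalizations worth popularizing in the social sciences typically pick out outcomes that some subset of participants strictly prefer to exploit (or to avoid), so the induced deviations are systematically biased in a single direction rather than martingale-like around the status quo. A complementary route, which I would present as a remark, is a diagonal-style argument in the spirit of the Lucas critique and Goodhart's law: any generalization that becomes a shared target of action ceases to describe a regularity of the untargeted system, because participants rationally re-optimize against it. Either route suffices to pin down the monotonicity claim, though neither delivers a sharp functional form — which I would flag as the natural direction for quantitative refinement.
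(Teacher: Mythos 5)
The paper does not actually prove this statement: it is labeled a conjecture, and the only support offered is the informal paragraph that follows it (once a generalization becomes common knowledge, the entry of many participants shifts the equilibrium, so the generalization ceases to apply to its stated conditions). Your mechanism is the same one, so you have faithfully reconstructed the paper's reasoning, and your formalization of popularity $\pi(G)$ and accuracy $\alpha(G)$ is a genuine gain in precision. You have also correctly located the weak point: everything hinges on the induced perturbations not cancelling, i.e., on the drift being systematically adverse to $G$.

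The gap is that neither of your proposed patches closes that hole, and the monotonicity claim is false as a fully general proposition. The asymmetric-incentive argument and the Goodhart/Lucas argument both cover only the case in which aware participants re-optimize \emph{against} the predicted regularity. They say nothing about generalizations to which participants \emph{conform}, and for those the sign of the effect reverses: a widely believed generalization can become more accurate as $\pi(G)$ rises, not less. The paper itself supplies the counterexample in its school-admissions section, where it describes university rankings as ``self fulfilling prophecies'' --- a generalization whose popularity reinforces its own predictive success; bank-run and momentum dynamics are of the same type. So your step ``in every case the resulting action differs from the action that would have been taken in $G$'s absence'' does not imply that $\alpha(G)$ falls, because the deviation can be toward $O$ rather than away from it. A proof would need an additional hypothesis excluding the self-fulfilling case (for instance, restricting attention to generalizations that describe exploitable, arbitrage-like opportunities, where the Goodhart mechanism genuinely bites), and absent such a hypothesis the statement must remain what the paper calls it: a conjecture, not a theorem.
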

\begin{doublespace}
This is because as soon as any generalization and its set of conditions
becomes common knowledge, the entry of many participants shifts the
equilibrium or the dynamics, such that the generalization no longer
applies to the known set of conditions. An observation is likely to
be more popular when there are more people comprising that system;
and it is important to try and explicitly understand, where possible,
how predictions can go awry. Every social system then operates under
the overarching reach of this principle. 

This varying behavior of participants in a social system will give
rise to unintended consequences (Kashyap \cite{key-7}; \cite{key-8})
and as long as participants are free to observe the results and modify
their actions, this effect will persist\footnote{\begin{doublespace}
Sergei Bubka {[}\href{http://en.wikipedia.org/wiki/Sergey_Bubka}{Sergey Bubka: http://en.wikipedia.org/wiki/Sergey\_{}Bubka}{]}
is our Icon of Uncertainty. As a refresher for the younger generation,
he broke the pole vault world record 35 times. We can think of regulatory
change or the utilization of newer methods and techniques as raising
the bar. Each time the bar is raised, the spirit of Sergei Bubka,
in all of us, will find a way over it.
\end{doublespace}
}. (Simon \cite{key-9}) points out that any attempt to seek properties
common to many sorts of complex systems (physical, biological or social),
would lead to a theory of hierarchy, since a large proportion of complex
systems observed in nature exhibit hierarchic structure.

(Kashyap \cite{key-10}; \cite{key-10-1}) consider ways to reduce
the complexity of social systems, which could be one way to mitigate
the effect of unintended outcomes. While attempts at designing less
complex systems are worthy endeavors, reduced complexity might be
hard to accomplish in certain instances and despite successfully reducing
complexity, alternate techniques at dealing with uncertainty are commendable
complementary pursuits. It might be possible to observe historical
trends (or other attributes) and make comparisons across fewer number
of entities; though in large systems where there are numerous components
or contributing elements, this can be a daunting task and constructing
measures across groups or aggregations of smaller systems need to
be explored (Kashyap \cite{key-2-5}; \cite{key-10-2}).
\end{doublespace}
\begin{doublespace}

\subsection{Randoptimization}
\end{doublespace}

\begin{doublespace}
(Kleywegt \& Shapiro \cite{key-10-3}) is a detailed account of decision
making under uncertainty and how decision problems are often formulated
as optimization problems and solved using stochastic optimization
techniques. In this present paper, we take a step further and look
at ways in which introducing randomness in our choices and in what
we control, and hence in the decision making process, could potentially
offset the uncertainty inherent in the environment and yield better
outcomes. 

In short, we try to overcome uncertainty with uncertainty. Such an
approach, while seemingly absurd, is the ideal medicine for the even
greater absurdity in the decision making process that has become prevalent
in today's society. It takes care of the issues that crop up due to
the limitations in our understanding of complex systems, and the widely
acknowledged problem that most of our measurements are highly prone
to errors. 
\end{doublespace}
\begin{rem}
\begin{doublespace}
\textbf{\textit{Necessity is the mother of all creation (invention
or innovation), but the often forgotten father, is frustration.}}
\end{doublespace}
\end{rem}
\begin{doublespace}
Pondering on the sources of uncertainty and the tools we have to capture
it, might lead us to believe that, either, the level of our mathematical
knowledge is not advanced enough, or, we are using the wrong methods.
Many interesting situations in life (See section \ref{sec:Discussion-of-Other}),
are caused by the uncertainty inherent in them, which we (all researchers
and society) seem to be looking to solve using logic (mathematics).
This paper is meant to illustrate why perhaps dealing with uncertainty,
something twisted, requires an equally twisted approach and perhaps,
solutions might not be obtained using straight techniques that rely
on precision. The dichotomy between logic and randomness is a topic
worth pursuing on many fronts. Our innovation is one possible alternative
methodology, succinctly expressed as, ``Fighting Uncertainty with
Uncertainty''. 

This technique is suitable for the social sciences since the primary
source of uncertainty are the members of the system themselves and
presently, no methods are known to fully determine the outcomes in
such an environment, which perhaps, would require being able to read
the minds of everyone involved and to anticipate their actions continuously.
Admittedly, we are not qualified to recommend whether such an approach
is conducive for the natural sciences, unless perhaps, bounds can
be established on the levels of uncertainty in a system and it is
shown that a better understanding of the system and hence improved
decision making will not alter the outcomes. Barring such a bound
on the level of uncertainty it is advisable to understand the sources
that cause arbitrary outcomes and follow more traditional methods.

The central innovation can be understood as optimizing to get an interval
of satisfactory performance and randomizing over that interval. The
goal is not just to optimize, but to identify a region of acceptable
performance and set the parameters of a probability distribution over
that region, and sample from it to provide an agreeable level of performance.
Agents looking to optimize will randomize over an optimal region,
hence this approach is called \textquotedbl{}Randoptimization\textquotedbl{}.
Due to measurement errors and other uncertainty, we can never be certain
of any optimization we perform, rather it is better to randomize over
acceptable states.

The example we develop in greater detail is the news-vendor inventory
management problem with demand uncertainty. The proposition and theorem
are new results and they depend on existing results which are given
as Lemmas. We briefly discuss areas, where such an approach might
be helpful, with the common prescription, ``Don't Simply Optimize,
Also Randomize; perhaps best described by the term - Randoptimization''.
\end{doublespace}
\begin{enumerate}
\begin{doublespace}
\item News-vendor Inventory Management Problem (Section \ref{sec:News-vendor-Problem}).
\item School Admissions.
\item Journal Submissions.
\item Job Candidate Selection.
\item Stock Picking.
\item Monetary Policy
\end{doublespace}
\end{enumerate}
\begin{doublespace}

\section{\label{sec:News-vendor-Problem}News-vendor Problem}
\end{doublespace}

\begin{doublespace}
The first of our case studies delves into the classical and well studied
problem in the supply chain management literature and is purely about
inventory control. We start with a basic model (Arrow, Harris \& Marschak
\cite{key-11}; Chen and Federgruen \cite{key-12}; Cachon \& Lariviere
\textbf{\cite{key-13})} and introduce the notion of random order
quantity under different levels of uncertainty. (Khouja \cite{key-14};
Qin, Wang, Vakharia, Chen \& Seref \cite{key-15}) provide a comprehensive
review and suggestions for future research. Interesting extensions
include (Gallego \& Moon \cite{key-16}), situations where the distribution
is unknown but the mean and variance are known; (Gallego \& Van Ryzin
\cite{key-17}), changing the price dynamically; (Wu, Li, Wang \&
Cheng \cite{key-18}) stockout costs under a mean variance framework;
(Wilson \& Sorochuk \cite{key-19}), revenues from a secondary stream
related to the product being sold are factored in; (Rubio-Herrero,
Baykal-Gürsoy \& Ja\'{s}kiewicz \cite{key-20}), setting the price
under a mean variance framework. 

(Liyanage \& Shanthikumar \cite{key-20-1}) propose an approach called
operational statistics, which integrates the tasks of parameter estimation
and optimization (maximizing expected profit for the single period
news-vendor model). When no assumptions are made about the form of
the demand distribution, (Chu, Shanthikumar \& Shen \cite{key-20-2})
derive operational statistics that maximize the performance uniformly
for all values of the unknown demand parameters using Bayesian analysis.

To set the stage for introducing our innovation, we first delve into
the basics of inventory theory in Lemma \ref{prop:The-variance-of}
and Lemma \ref{prop:The-optimal-quantity}. The solution, based on
our innovation, is not to find an optimal quantity to order, but to
find an optimal distribution, from which a sample is drawn to give
the order quantity, with the objective of maximizing the expected
profit. The key findings are outlined in Theorem \ref{Theorem:The-profits-in}
and Proposition \ref{prop:If-the-expected}, which are novel, to the
best of our knowledge. 
\end{doublespace}
\begin{doublespace}

\subsection{Notation and Terminology}
\end{doublespace}

\begin{doublespace}
Let $Q$ represent the amount ordered by the retailer before the selling
season starts. Let, the unit cost to manufacture the product be $c$,
the wholesale price at which manufacturer produces and sells to the
retailer be $w$, the salvage value of any unsold product is $s$
per unit, and the stockout cost of unsatisfied demand is $r$ per
unit. The final price at which retailer sells is $p$. Except in the
benchmark model, to avoid unrealistic and trivial cases, we assume
that $0<s<w<p$ , $p>c$ and $0<r$. Retailer faces stochastic demand,
$D$, with Cumulative distribution $F$ and density $f$. When $Q$
is stochastic, it has cumulative distribution $G$ and density $g$.
We assume that $F,G$ are continuous and strictly increasing functions
and $f,g$ are non-negative functions. $Q$ and $D$ have joint distribution
function $F_{Q,D}\left(q,d\right)\Longleftrightarrow F_{X,Y}\left(x,y\right)$
and joint density function $f_{Q,D}\left(q,d\right)\Longleftrightarrow f_{X,Y}\left(x,y\right)$.
Notice that we write $X$ for $Q$ and $Y$ for $D$ to prevent confusions
later when we work with calculus notations. We further assume $Q$
and $D$ are independent giving, $f_{X,Y}\left(x,y\right)=g\left(x\right)f\left(y\right)$
and $F_{X,Y}\left(x,y\right)=G\left(x\right)F\left(y\right)$. We
will relax this assumption of independence and assume that the covariance
between the processes is given by, $\sigma_{XY}$ in subsequent papers
(Kashyap \cite{key-20-3}).
\end{doublespace}
\begin{itemize}
\begin{doublespace}
\item $D\sim F\left[0,u\right]$, $D$ is distributed according to the cumulative
distribution $F$ over the interval $\left[0,u\right]$. 
\item $F,$ is increasing and has full support, which is the non-negative
real line $\left[0,\infty\right]$.
\item $f=F',$ is the continuous density function of $F$.
\item $D\sim U\left[a,b\right]$, if we consider the uniform distribution.
\item $D\sim LN\left[0,u\right]$, if we consider the log normal distribution.
\item $Q\sim G\left[0,v\right]$, $Q$ is distributed according to the cumulative
distribution $G$ over the interval $\left[0,v\right]$. 
\item $G,$ is increasing and has full support, which is the non-negative
real line $\left[0,\infty\right]$.
\item $g=G',$ is the continuous density function of $G$.
\item $Q\sim U\left[u,v\right]$, if we consider the uniform distribution.
\item $Q\sim LN\left[0,v\right]$, if we consider the log normal distribution.
\item $Q,D\sim F_{X,Y}\left(x,y\right)$, $Q$ and $D$ have joint distribution
function $F_{X,Y}\left(x,y\right)=F\left(x\right)G\left(y\right)$
and joint density function $f_{X,Y}\left(x,y\right)=g\left(x\right)f\left(y\right)$
because of independence. When we relax the assumption of independence,
$\text{cov}\left(Q,D\right)=\text{cov}\left(X,Y\right)=\sigma_{XY}$.
\item $\forall$ $a$, $b$ $\in\Re_{+}$, $\quad$$a^{+}=\max\left\{ a,0\right\} $,
$\quad$$a^{-}=\max\left\{ -a,0\right\} $, and$\quad$ $a\bigwedge b=\min\left\{ a,b\right\} $.
\item $\pi_{R}$, the expected profits of the retailer when only Demand,
$D$, is stochastic and $Q$ the quantity ordered is the control variable.
\item $Q^{*}$, the optimal quantity to order when only Demand, $D$, is
stochastic, under the true distribution, $F\left(y\right)$.
\item $\pi_{R}^{*}$, the optimal expected profits of the retailer when
only Demand, $D$, is stochastic and $Q$ the quantity ordered is
the control variable, under the true distribution, $F\left(y\right)$.
\item $\pi_{RS}$, the expected profits of the retailer when Demand, $D$,
is stochastic and $Q$ the quantity ordered is also stochastic.
\item $\tilde{F}\left(y\right),\tilde{f}\left(y\right)$, the estimated
demand distribution and density, which will be different from the
actual distribution, $F\left(y\right)\neq\tilde{F}\left(y\right)\Rightarrow f\left(y\right)\neq\tilde{f}\left(y\right)$,
due to measurement errors and other issues.
\item $\hat{F}\left(y\right),\hat{f}\left(y\right)$, the compound demand
distribution and density, with the assumption that the parameters
of the estimated distribution, $\tilde{F}\left(y\right)$, have distributions
of their own, due to measurement errors and other issues.
\item $\hat{F}_{X,Y}\left(x,y\right)=G\left(x\right)\hat{F}\left(y\right)$,
$\hat{f}_{X,Y}\left(x,y\right)=g\left(x\right)\hat{f}\left(y\right)$,
are the joint distribution function and joint density function of
the compound demand process, $\hat{F}\left(y\right),\hat{f}\left(y\right)$,
and the stochastic quantity ordered. 
\item $\hat{Q^{*}}$, the actual quantity that will be ordered, with the
optimization performed using the estimated distribution, $\tilde{F}\left(y\right)$.
\item $\hat{\pi_{R}^{*}}$, the actual expected profits under the compound
distribution, $\hat{F}\left(y\right)$ when the quantity ordered is
$\hat{Q^{*}}$.
\end{doublespace}
\end{itemize}
\begin{doublespace}

\subsection{Benchmark Model}
\end{doublespace}

\begin{doublespace}
Before we look at stochastic variations in the quantity to be ordered,
we collect results when the retailer optimizes his profit function
to obtain the optimal amount since it offers a comparison point for
the later scenarios. We begin with the simplest possible scenario
by setting $s=r=0$. This assumption implies there is no stockout
costs (no loss of goodwill or direct noticeable cost) if a demand
is not fulfilled by the retailer to both the manufacturer and retailer
and there is no salvage value. We assume that this is a full information
scenario, that is a game in which both parties know about the constraints
and incentives that the other party faces and is looking to optimize
in what environment. The retailer maximizes his profit based on the
following criteria,

\begin{eqnarray*}
\pi_{R} & = & \text{if }\left(D>Q\right),\\
 &  & \quad\left\{ Qp-Qw\right\} \\
 &  & \text{else if }\left(D<Q\right),\\
 &  & \quad\left\{ Dp-Qw\right\} 
\end{eqnarray*}
\[
E\left(\pi_{R}\right)=E_{D}\;\left[\min\left(Q,D\right)p-Qw\right]
\]
\[
\pi_{R}^{*}=\underset{Q}{\max}\quad E_{D}\;\left[\min\left(Q,D\right)p-Qw\right]
\]

\end{doublespace}
\begin{lem}
\begin{doublespace}
\label{prop:The-variance-of}The mean and variance of the profits
are given by,
\[
E\left(\pi_{R}\right)=\left(p-w\right)Q-p\int_{0}^{Q}F\left(t\right)dt
\]
\[
Var\left(\pi_{R}\right)=p^{2}\left[\int_{0}^{Q}2\left(Q-t\right)F\left(t\right)dt-\left\{ \int_{0}^{Q}F\left(t\right)dt\right\} ^{2}\right]
\]
\end{doublespace}
\end{lem}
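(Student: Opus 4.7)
The plan is to view the retailer's profit as an affine function of the single random quantity $\min(Q,D)$, namely $\pi_R = p\min(Q,D) - Qw$ with $Qw$ a deterministic shift, and then to evaluate the first and second moments of $\min(Q,D)$ by the standard tail‑integral formulas for non‑negative random variables. Since $D\ge 0$ we have $\min(Q,D)\in[0,Q]$, so its survival function is $P(\min(Q,D)>t)=\mathbf{1}_{\{t<Q\}}\,(1-F(t))$, which is the single observation that drives everything.

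First I would establish the mean. Applying $E[X]=\int_{0}^{\infty}P(X>t)\,dt$ to $X=\min(Q,D)$ gives
\[
E[\min(Q,D)]=\int_{0}^{Q}\bigl(1-F(t)\bigr)\,dt=Q-\int_{0}^{Q}F(t)\,dt,
\]
and substituting into $E(\pi_R)=pE[\min(Q,D)]-Qw$ yields the first claim $(p-w)Q - p\int_{0}^{Q}F(t)\,dt$.

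For the variance, the constant shift drops out so $\mathrm{Var}(\pi_R)=p^{2}\mathrm{Var}(\min(Q,D))$. I would compute the second moment via $E[X^{2}]=\int_{0}^{\infty}2t\,P(X>t)\,dt$, obtaining
\[
E[\min(Q,D)^{2}]=\int_{0}^{Q}2t\bigl(1-F(t)\bigr)\,dt=Q^{2}-\int_{0}^{Q}2tF(t)\,dt.
\]
Subtracting $(E[\min(Q,D)])^{2}=Q^{2}-2Q\int_{0}^{Q}F(t)\,dt+\bigl(\int_{0}^{Q}F(t)\,dt\bigr)^{2}$ and recognizing the algebraic identity $2Q\int_{0}^{Q}F(t)\,dt-\int_{0}^{Q}2tF(t)\,dt=\int_{0}^{Q}2(Q-t)F(t)\,dt$ produces exactly the bracketed expression in the statement.

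There is no genuine obstacle here; the lemma is a straightforward moment computation. The only point that requires a little care is to make sure the tail formulas are applied to the truncated variable $\min(Q,D)$ rather than to $D$ itself, so that the integration range collapses to $[0,Q]$; and to keep track of the algebraic rearrangement that converts $2Q\int F - \int 2tF$ into $\int 2(Q-t)F$, which is the only non‑trivial bookkeeping step. Everything else is immediate from the standard identities for moments of non‑negative random variables.
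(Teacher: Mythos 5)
Your argument is correct and reaches both formulas cleanly, but it is organized differently from the paper's proof. The paper decomposes $\min(Q,D)=Q-(Q-D)^{+}$, writes $\pi_{R}=(p-w)Q-p(Q-D)^{+}$, and then evaluates $E\left[(Q-D)^{+}\right]=\int_{0}^{Q}(Q-t)f(t)\,dt$ and $E\left[\left\{ (Q-D)^{+}\right\} ^{2}\right]=\int_{0}^{Q}(Q-t)^{2}f(t)\,dt$ as density integrals, converting each to a CDF integral by an explicit integration by parts; the variance then falls out because the cross terms cancel. You instead work with $\min(Q,D)$ directly and invoke the tail-integral identities $E[X]=\int_{0}^{\infty}P(X>t)\,dt$ and $E[X^{2}]=\int_{0}^{\infty}2tP(X>t)\,dt$ for the truncated nonnegative variable, whose survival function $\mathbf{1}_{\{t<Q\}}\left(1-F(t)\right)$ you identify correctly. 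The two routes are equivalent --- the tail formulas are integration by parts in disguise --- but yours buys a slightly shorter computation by never introducing $(Q-D)^{+}$ or touching the density $f$, at the cost of quoting the second-moment tail identity, which the paper's more elementary integration-by-parts steps derive implicitly. Your final algebraic recombination $2Q\int_{0}^{Q}F(t)\,dt-\int_{0}^{Q}2tF(t)\,dt=\int_{0}^{Q}2(Q-t)F(t)\,dt$ is exactly right and matches the bracketed expression in the statement.
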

\begin{proof}
\begin{doublespace}
See Appendix \ref{subsec:Proof-of-Proposition-1}
\end{doublespace}
\end{proof}
\begin{lem}
\begin{doublespace}
\label{prop:The-optimal-quantity}The optimal quantity to order, the
optimal expected profits of the retailer and the variance when the
optimal quantity is ordered are given by,\textbf{
\[
Q^{*}=F^{-1}\left(1-\frac{w}{p}\right)
\]
}
\begin{eqnarray*}
\pi_{R}^{*}= & Q^{*}\left(p-w\right)-p\int_{0}^{Q^{*}}F\left(t\right)dt= & p\int_{0}^{Q^{*}}tf\left(t\right)dt
\end{eqnarray*}
\textbf{
\begin{eqnarray*}
\pi_{R}^{*} & = & p\int_{0}^{Q^{*}}P\left(D>t\right)dt-Q^{*}w
\end{eqnarray*}
}
\begin{eqnarray*}
Var\left(\pi_{R}^{*}\right) & = & p^{2}\left[\int_{0}^{Q^{*}}2\left(Q^{*}-t\right)F\left(t\right)dt-\left\{ \int_{0}^{Q^{*}}F\left(t\right)dt\right\} ^{2}\right]
\end{eqnarray*}
\begin{eqnarray*}
Var\left(\pi_{R}^{*}\right) & = & p^{2}\left[\left(Q^{*}\right)^{2}\left\{ 1-\left(\frac{w}{p}\right)^{2}\right\} -\left(\int_{0}^{Q^{*}}tf\left(t\right)dt\right)^{2}-2Q^{*}\frac{w}{p}\int_{0}^{Q^{*}}tf\left(t\right)dt-2\int_{0}^{Q^{*}}tF\left(t\right)dt\right]
\end{eqnarray*}
\end{doublespace}
\end{lem}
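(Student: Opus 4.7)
The plan is to derive $Q^{*}$ by a first-order optimization applied to the mean expression of Lemma \ref{prop:The-variance-of}, then to re-express the optimized profit through integration by parts, and finally to read off the two variance identities by direct substitution and bookkeeping.

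First I would differentiate the expected-profit formula $E(\pi_{R}) = (p-w)Q - p\int_{0}^{Q} F(t)\,dt$ with respect to $Q$. By the fundamental theorem of calculus this yields $(p-w) - pF(Q)$; setting the derivative to zero and using strict monotonicity of $F$ gives $Q^{*} = F^{-1}(1-w/p)$. The second derivative is $-pf(Q) \leq 0$, which confirms that this critical point is a maximizer (strictly so when $f(Q^{*}) > 0$).

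Next I would substitute $Q^{*}$ into the Lemma \ref{prop:The-variance-of} formula to produce $\pi_{R}^{*} = Q^{*}(p-w) - p\int_{0}^{Q^{*}} F(t)\,dt$. To obtain the second form $p\int_{0}^{Q^{*}} t f(t)\,dt$, I would invoke the identity $\int_{0}^{Q^{*}} F(t)\,dt = Q^{*}F(Q^{*}) - \int_{0}^{Q^{*}} t f(t)\,dt$ from integration by parts and use $F(Q^{*}) = 1-w/p$, whereupon the $Q^{*}(p-w)$ contribution cancels cleanly against $pQ^{*}F(Q^{*})$. The third form follows at once by writing $F(t) = 1 - P(D>t)$ under the integral and regrouping the constant.

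For the variance, the first displayed identity is simply the formula from Lemma \ref{prop:The-variance-of} evaluated at $Q=Q^{*}$. For the expanded form I would split $\int_{0}^{Q^{*}} 2(Q^{*}-t)F(t)\,dt$ as $2Q^{*}\int_{0}^{Q^{*}} F(t)\,dt - 2\int_{0}^{Q^{*}} t F(t)\,dt$, then substitute the integration-by-parts identity for $\int_{0}^{Q^{*}} F(t)\,dt$ and finally expand the squared bracket $\{Q^{*}(1-w/p) - \int_{0}^{Q^{*}} t f(t)\,dt\}^{2}$. After collecting like terms, the contributions in $(Q^{*})^{2}$ combine to $(Q^{*})^{2}\{1-(w/p)^{2}\}$, the cross term produces $-2Q^{*}(w/p)\int_{0}^{Q^{*}} t f(t)\,dt$, the square of the $tf(t)$ integral appears with a minus sign, and the surviving $-2\int_{0}^{Q^{*}} t F(t)\,dt$ remains untouched. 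The main obstacle is not conceptual but combinatorial: the $(Q^{*})^{2} w/p$ terms must cancel in pairs and the sign of each cross term must be tracked carefully; otherwise the derivation is routine once the optimality relation $F(Q^{*}) = 1-w/p$ is in hand.
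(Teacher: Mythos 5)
Your proposal is correct and follows essentially the same route as the paper: first-order conditions giving $F(Q^{*})=1-w/p$, integration by parts to pass between the $\int F$ and $\int tf$ forms of the profit, and direct substitution into the Lemma \ref{prop:The-variance-of} variance formula followed by algebraic expansion. The only cosmetic difference is that you differentiate the already-simplified mean expression from Lemma \ref{prop:The-variance-of} (and add a second-order check), whereas the paper applies the Leibniz rule to the raw two-integral form of the expectation; both yield the identical optimality condition.
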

\begin{proof}
\begin{doublespace}
See Appendix \ref{subsec:Proof-of-Proposition}.
\end{doublespace}
\end{proof}
\begin{doublespace}

\subsection{Stochastic Quantity Ordered}
\end{doublespace}

\begin{doublespace}
Any distribution, $\tilde{F}\left(y\right)$, we estimate for the
demand will be different from the true distribution, $F\left(y\right)\neq\tilde{F}\left(y\right)\Rightarrow f\left(y\right)\neq\tilde{f}\left(y\right)$,
due to measurement errors, change in preferences of agents and other
matters that make predictions go awry. This means that the parameters
of the estimated distribution can be further assumed to have distributions
of their own. A reasonable assumption can be that the parameters of
the estimated distribution are normal (or any other suitable one),
making the demand distribution a compound distribution, $\hat{F}\left(y\right)$
with density $\hat{f}\left(y\right)$. Hence the actual quantity we
order, $\hat{Q^{*}}$, could be different from the true optimal quantity,
$\hat{Q^{*}}\neq Q^{*}$, and and hence the profits that actually
accrue, $\hat{\pi_{R}^{*}}$, could be lower than the true optimal
profits, $\hat{\pi_{R}^{*}}\leq\pi_{R}^{*}$.

It is important to keep in mind that, the actual quantity ordered,
$\hat{Q^{*}}$, is the result of an optimization performed using the
estimated distribution, $\tilde{F}\left(y\right)$; but the profits
that will accrue, $\hat{\pi_{R}^{*}}$, when we order this quantity,
$\hat{Q^{*}}$, is the result obtained, when operating in an environment
where the compound distribution, $\hat{F}\left(y\right)$, holds influence.
To counter this measurement error, we introduce randomness in the
number of items ordered. This means, $Q$ representing the amount
ordered by the retailer, before the selling season starts, will be
stochastic, with cumulative distribution $G$ and density $g$. $Q$
and $D$ have joint distribution function $\hat{F}_{X,Y}\left(x,y\right)=G\left(x\right)\hat{F}\left(y\right)$
and joint density function $\hat{f}_{X,Y}\left(x,y\right)=g\left(x\right)\hat{f}\left(y\right)$
because of independence.
\end{doublespace}
\begin{thm}
\begin{doublespace}
\label{Theorem:The-profits-in}When the quantity ordered is stochastic:
the expected profits, $E\left[\pi_{RS}\right]$, will be greater than,
or equal to, the expected profits of the benchmark model, $\hat{\pi_{R}^{*}}$,
such that the condition, $E\left[\pi_{RS}\right]\geq\hat{\pi_{R}^{*}}$
is satisfied, ensuring that this approach will yield better outcomes,
as long as the following restriction holds on the parameters. 
\begin{eqnarray*}
E\left[Q\right]\left[1-\frac{w}{p}\right]+E\left[D\right]-E\left[\max\left(Q,D\right)\right]\quad & \geq & \quad\int_{0}^{\hat{Q^{*}}}t\hat{f}\left(t\right)dt\\
\left\{ \text{Here,}\quad\hat{Q^{*}}=\tilde{F}^{-1}\left(1-\frac{w}{p}\right)\right\} 
\end{eqnarray*}
The expected profits, $E\left[\pi_{RS}\right]$, are with resect to
the joint distribution, $\hat{F}_{X,Y}\left(x,y\right)=G\left(x\right)\hat{F}\left(y\right)$,
where demand is governed by the compound distribution, $\hat{F}\left(y\right)$.
The expected profits of the benchmark model, $\hat{\pi_{R}^{*}}$,
are with respect to the compound distribution, $\hat{F}\left(y\right)$,
when the actual quantity ordered, $\hat{Q^{*}}$, is obtained by optimizing
under the estimated distribution, $\tilde{F}\left(y\right)$.
\end{doublespace}
\end{thm}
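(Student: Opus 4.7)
The plan is to express $E[\pi_{RS}]$ in closed form by exploiting the independence of $Q$ and $D$ under the joint density $\hat{f}_{X,Y}(x,y) = g(x)\hat{f}(y)$, and then compare the result, term by term, with the closed form for $\hat{\pi_R^{*}}$ that falls out of Lemma \ref{prop:The-optimal-quantity} when the compound distribution $\hat{F}$ is used in place of $F$.

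First I would write the realized profit as $\pi_{RS}(Q,D) = p\min(Q,D) - wQ$, mirroring the benchmark setup but with $Q$ now random. Taking expectation with respect to $\hat{F}_{X,Y}$ yields $E[\pi_{RS}] = p\,E[\min(Q,D)] - w\,E[Q]$. Next I would apply the elementary identity $\min(Q,D) + \max(Q,D) = Q + D$, which holds realization by realization and therefore survives expectation, to rewrite
\[
E[\pi_{RS}] = p\Bigl(E[Q] + E[D] - E[\max(Q,D)]\Bigr) - w\,E[Q] = p\left\{E[Q]\left[1 - \frac{w}{p}\right] + E[D] - E[\max(Q,D)]\right\}.
\]
On the benchmark side, I would invoke Lemma \ref{prop:The-optimal-quantity} with the compound distribution $\hat{F}$ in place of $F$ and with $\hat{Q^{*}} = \tilde{F}^{-1}(1 - w/p)$ as the order quantity, obtaining $\hat{\pi_R^{*}} = p\int_0^{\hat{Q^{*}}} t\hat{f}(t)\,dt$ as the analogue of the second expression in that lemma. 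Dividing the target inequality $E[\pi_{RS}] \geq \hat{\pi_R^{*}}$ through by $p$ and substituting the two expressions above yields exactly the parameter restriction stated in the theorem.

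The main obstacle is the distributional mismatch between the first-order condition defining $\hat{Q^{*}}$, which is taken under $\tilde{F}$, and the distribution $\hat{F}$ under which profits actually accrue. Passing from the integration-by-parts form $(p-w)\hat{Q^{*}} - p\int_0^{\hat{Q^{*}}} \hat{F}(t)\,dt$ to the compact form $p\int_0^{\hat{Q^{*}}} t\hat{f}(t)\,dt$ ordinarily requires the equality $\hat{F}(\hat{Q^{*}}) = 1 - w/p$, whereas here this equality is only guaranteed under the estimated distribution $\tilde{F}$. I would handle this either by treating the compact form as the working definition of $\hat{\pi_R^{*}}$ (consistent with how the theorem is stated), or by absorbing the residual boundary term $\hat{Q^{*}}\bigl[\tilde{F}(\hat{Q^{*}}) - \hat{F}(\hat{Q^{*}})\bigr]$ into the sufficient condition. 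Once this bookkeeping is settled, no probabilistic tool beyond linearity of expectation and the independence of $Q$ and $D$ is required, and the theorem follows by direct substitution.
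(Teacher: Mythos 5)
Your proposal is correct and follows essentially the same route as the paper: the paper derives $E[\pi_{RS}] = pE[Q]+pE[D]-pE[\max(Q,D)]-wE[Q]$ by a double-integral computation but explicitly notes the identity $\min(X,Y)+\max(X,Y)=X+Y$ as an equivalent shortcut, which is exactly what you use, and the comparison against $\hat{\pi_{R}^{*}}=p\int_{0}^{\hat{Q^{*}}}t\hat{f}(t)\,dt$ is identical. Your observation about the mismatch between the condition $\tilde{F}(\hat{Q^{*}})=1-w/p$ and the passage to the compact form $p\int_{0}^{\hat{Q^{*}}}t\hat{f}(t)\,dt$ is a genuine subtlety that the paper silently resolves by taking the compact form as the working definition of $\hat{\pi_{R}^{*}}$ --- your first proposed fix --- so your bookkeeping is, if anything, more careful than the paper's.
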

\begin{proof}
\begin{doublespace}
See Appendix \ref{subsec:Proof-of-Theorem}.
\end{doublespace}
\end{proof}
\begin{doublespace}
Instead of optimizing the quantity ordered with respect to the compound
distribution of demand with measurement errors, randomizing is better
due to the errors in measurement errors and so on, ad infinitum (Taleb
\cite{key-21}). All of these errors are extremely hard to forecast,
considering that a big unknown is how many recursive levels of errors
we might encounter, or, what is a satisfactory level to stop. 

A central question is: how will buyers respond to different inventory
management techniques. (DeGraba \cite{key-22}) investigates tactics
employed by sellers to induce excess demand, (leading to higher prices),
by creating artificial shortages. The message from the usage of such
methods is that we can randomize across a subset of states for the
quantity ordered, or, pick randomly from a few of the most likely
possibilities, (or the desired set, that need not be the most likely),
for the quantity ordered. This can also be understood as optimizing
to get an interval of acceptable performance and randomizing over
that interval; given our questionable ability to get the best outcome,
we go with a mix of the better outcomes. Hence, the prescription,
``Don't Simply Optimize, Also Randomize; perhaps best described by
the term - Randoptimization''. Since we control the choice of the
distribution for $Q,$ we note a nice property in the following proposition.
\end{doublespace}
\begin{prop}
\begin{doublespace}
\label{prop:If-the-expected}The parameters of the distribution for
the stochastic quantity ordered can be got by solving the below expression,
if its expected value is set such that $E\left[Q\right]=\hat{Q^{*}}$.
\[
\int_{0}^{\infty}xg\left(x\right)\hat{F}\left(x\right)dx+\int_{0}^{\infty}y\hat{f}\left(y\right)G\left(y\right)dy\leq\hat{Q^{*}}\hat{F}\left(\hat{Q}^{*}\right)+\int_{\hat{Q}^{*}}^{\infty}t\hat{f}\left(t\right)dt
\]
\end{doublespace}
\end{prop}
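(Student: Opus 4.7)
My plan is to recognize the left-hand side of the proposition's inequality as $E[\max(Q,D)]$ and reduce the claim to the sufficient condition established in Theorem \ref{Theorem:The-profits-in} specialized at $E[Q]=\hat{Q^*}$.

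First, I would prove the identity
\[
E[\max(Q,D)]=\int_{0}^{\infty}x g(x)\hat{F}(x)\,dx+\int_{0}^{\infty}y \hat{f}(y)G(y)\,dy.
\]
Because of independence, the joint density is $g(x)\hat{f}(y)$, so I would split the first quadrant into $\{x\geq y\}$ and $\{x<y\}$, on which $\max$ equals $x$ and $y$ respectively. Carrying out the inner integral produces $\hat{F}(x)$ in the first region and $G(y)$ in the second, yielding the two terms above. This already converts the LHS of the proposition into the quantity $E[\max(Q,D)]$ that appears in the hypothesis of Theorem \ref{Theorem:The-profits-in}.

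Next, I would invoke that sufficient condition, which reads
\[
E[Q]\left[1-\tfrac{w}{p}\right]+E[D]-E[\max(Q,D)]\;\geq\;\int_{0}^{\hat{Q^{*}}}t\hat{f}(t)\,dt.
\]
Imposing $E[Q]=\hat{Q^{*}}$ and rearranging isolates $E[\max(Q,D)]$ on the left. I would then split the expected demand under the compound distribution as $E[D]=\int_{0}^{\hat{Q^{*}}}t\hat{f}(t)\,dt+\int_{\hat{Q^{*}}}^{\infty}t\hat{f}(t)\,dt$, so that the truncated integral cancels against the right-hand side. What remains is
\[
E[\max(Q,D)]\;\leq\;\hat{Q^{*}}\!\left[1-\tfrac{w}{p}\right]+\int_{\hat{Q^{*}}}^{\infty}t\hat{f}(t)\,dt.
\]
Combined with the identity from the first step, this is almost the statement of the proposition.

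The final step is to match $\hat{Q^{*}}[1-w/p]$ with $\hat{Q^{*}}\hat{F}(\hat{Q^{*}})$. Applying the first-order condition of Lemma \ref{prop:The-optimal-quantity} to the relevant demand distribution gives exactly $\hat{F}(\hat{Q^{*}})=1-w/p$, and substituting this in completes the derivation. The main obstacle I anticipate is reconciling this identification cleanly with the notation in the excerpt: the defining relation $\hat{Q^{*}}=\tilde{F}^{-1}(1-w/p)$ is phrased in terms of the estimated distribution $\tilde{F}$, whereas the proposition's right-hand side uses the compound distribution $\hat{F}$. I would handle this either by applying Lemma \ref{prop:The-optimal-quantity} to $\hat{F}$ directly (treating $\hat{Q^{*}}$ as the optimizer under the compound demand, consistent with the proposition's statement), or by reading the inequality as a design criterion for the parameters of $g$ in which this optimality relation is imposed. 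With that convention in place, all remaining manipulations are routine.
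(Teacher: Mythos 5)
Your overall strategy is the paper's: identify the left-hand side as $E\left[\max\left(Q,D\right)\right]$ via the independence decomposition, use $\min\left(X,Y\right)+\max\left(X,Y\right)=X+Y$, impose $E\left[Q\right]=\hat{Q^{*}}$, and reduce everything to an upper bound on $E\left[\max\left(Q,D\right)\right]$. The first step and the cancellation of $\int_{0}^{\hat{Q^{*}}}t\hat{f}\left(t\right)dt$ against part of $E\left[D\right]$ are fine. The gap is in your final step. You need $\hat{F}\left(\hat{Q^{*}}\right)=1-\frac{w}{p}$ to convert $\hat{Q^{*}}\left[1-\frac{w}{p}\right]$ into $\hat{Q^{*}}\hat{F}\left(\hat{Q^{*}}\right)$, but that relation is false in this setup: by definition $\hat{Q^{*}}=\tilde{F}^{-1}\left(1-\frac{w}{p}\right)$, so it is $\tilde{F}\left(\hat{Q^{*}}\right)$, not $\hat{F}\left(\hat{Q^{*}}\right)$, that equals $1-\frac{w}{p}$, and the premise of the whole construction is that $\tilde{F}\neq\hat{F}$ because of measurement error. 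Your first proposed fix (treat $\hat{Q^{*}}$ as the optimizer under $\hat{F}$) collapses the distinction between the estimated and compound distributions that motivates the result; your second (read it as an imposed design relation) adds an assumption the proposition does not make. You correctly sensed the tension but did not resolve it.

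The resolution the paper uses is to bypass the $p\int_{0}^{\hat{Q^{*}}}t\hat{f}\left(t\right)dt$ form of the benchmark entirely. Write the benchmark profit as $\hat{\pi_{R}^{*}}=E_{D}\left[\min\left(\hat{Q^{*}},D\right)p-\hat{Q^{*}}w\right]$ with $D\sim\hat{F}$ (this is its definition in the notation list: the actual expected profit under the compound distribution when $\hat{Q^{*}}$ is ordered). Applying the min/max identity to \emph{both} sides of $E\left[\pi_{RS}\right]\geq\hat{\pi_{R}^{*}}$ and using $E\left[Q\right]=\hat{Q^{*}}$, every term cancels except $E_{Q,D}\left[\max\left(Q,D\right)\right]\leq E_{D}\left[\max\left(\hat{Q^{*}},D\right)\right]$. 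The right-hand side is then a direct computation,
\begin{equation*}
E_{D}\left[\max\left(\hat{Q^{*}},D\right)\right]=\int_{0}^{\hat{Q^{*}}}\hat{Q^{*}}\hat{f}\left(t\right)dt+\int_{\hat{Q^{*}}}^{\infty}t\hat{f}\left(t\right)dt=\hat{Q^{*}}\hat{F}\left(\hat{Q^{*}}\right)+\int_{\hat{Q^{*}}}^{\infty}t\hat{f}\left(t\right)dt,
\end{equation*}
so $\hat{Q^{*}}\hat{F}\left(\hat{Q^{*}}\right)$ arises as the truncated expectation of the constant $\hat{Q^{*}}$, with no first-order condition anywhere. (The mismatch you found is real, but it is an inconsistency between the right-hand side of Theorem \ref{Theorem:The-profits-in} and the benchmark used in Proposition \ref{prop:If-the-expected} --- the two agree only when $\hat{F}\left(\hat{Q^{*}}\right)=1-\frac{w}{p}$; the correct move is to start from the $E_{D}\left[\min\left(\hat{Q^{*}},D\right)\right]$ form rather than to impose that equality.)
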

\begin{proof}
\begin{doublespace}
See Appendix \ref{subsec:Proof-of-Proposition-2}.

The result in proposition \ref{prop:If-the-expected} can be viewed
as an extra set of constraints on the parameters of the distribution,
from which we have to choose the quantity to order. This simplification
has an intuitive appeal that the expected value of the stochastic
quantity ordered is equal to the optimal estimation of the quantity
to order, in the presence of measurement errors. Depending on the
distribution for the demand and the distribution from which we sample
the quantity to order, this criteria can turn out to be highly restrictive,
in which case the more relaxed results from theorem \ref{Theorem:The-profits-in}
need to be used instead.
\end{doublespace}
\end{proof}
\begin{doublespace}

\section{\label{sec:Discussion-of-Other}Discussion of Other Applications}
\end{doublespace}

\begin{doublespace}
We provide a brief synopsis of how this concept can be useful in other
areas, with a complete development of each sub-topic planned in subsequent
papers.
\end{doublespace}
\begin{doublespace}

\subsection{School Admissions}
\end{doublespace}

\begin{doublespace}
University rankings and the rush to get in to the best schools, as
determined by these rankings, is well acknowledged and studied. The
prestige associated with having attended a top school lingers on for
a lifetime and beyond. The admission process and how valid or effective
it is in determining determining academic success or corporate productivity
are debatable at best: (Fishman and Pasanella \cite{key-23}; Deckro
\& Woundenberg \cite{key-24}; Willingham \cite{key-25}; Sorensen
\cite{key-26}; Sandow, Jones, Peek, Courts \& Watson \cite{key-27}). 

While aspirations for higher rankings and the quest for excellence
are worthy pursuits, what rankings have created are false targets
and misleading perceptions. The goal should not simply be to get into
the best possible school, but to get into a good school and get the
best possible education. No doubt, admission to an elite university
can help obtain a great education, but the difference in the quality
of teaching at most good universities is not that significant. Without
going into too many details, we mention that people go to great lengths
simply to secure admission into the top colleges. The current set
up of rankings and the best ranked institutions attracting the best
minds and resources, become self fulfilling prophecies that segment
society, which is one of the primary outcomes that education seeks
to eradicate.
\end{doublespace}
\begin{conjecture}
\begin{doublespace}
When making offers of admission to schools, it can be made publicly
known that the selection will be done at random from a pool of candidates
who meet a certain number of basic qualifying criteria. Different
universities could even collaborate to select students from the overall
pool in this way, after considering applicable constraints and preferences.
This will ensure that the rankings will change as the world changes
and the focus will not be on getting into the best university, but
to get into a good university and getting the best possible education.
\end{doublespace}
\end{conjecture}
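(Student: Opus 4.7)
The plan is to translate the Randoptimization argument of Section \ref{sec:News-vendor-Problem} into the admissions context. First I would formalize a model in which each applicant in a qualifying pool carries a latent ``fit'' $D$ that is observed only through a noisy composite score $\tilde{D}$, whose underlying parameters are themselves uncertain, so that the admissions office effectively faces a compound distribution $\hat{F}$ exactly analogous to the demand distribution of Section \ref{sec:News-vendor-Problem}. The ``order quantity'' $Q$ becomes the randomized assignment of offers over the qualified pool, and the social welfare $\pi$ combines average fit with a term penalizing self-reinforcing rank concentration. The unit economics $(w,p,s,r)$ are reinterpreted as per-slot admission cost, per-admit benefit, residual value of an unfilled seat, and penalty for a missed good match.

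Next, I would reuse Lemma \ref{prop:The-optimal-quantity} to write down the deterministic benchmark: the myopic rule $\hat{Q^{*}}=\tilde{F}^{-1}\left(1-\frac{w}{p}\right)$ that admits strictly by observed rank attains expected welfare $\hat{\pi_{R}^{*}}$ under the estimated $\tilde{F}$ but is strictly suboptimal once evaluated under the true compound $\hat{F}$. I would then invoke Theorem \ref{Theorem:The-profits-in} essentially verbatim: a stochastic offer distribution $G$ supported on the qualified pool yields $E\left[\pi_{RS}\right]\geq\hat{\pi_{R}^{*}}$ whenever the constraint on $E\left[Q\right]$, $E\left[D\right]$ and $E\left[\max\left(Q,D\right)\right]$ is met, and Proposition \ref{prop:If-the-expected} would be used to calibrate $G$ so that $E\left[Q\right]=\hat{Q^{*}}$, ensuring that the randomized rule and the deterministic one agree on average. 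The inter-university collaboration clause is handled by letting several schools share a single pool and encoding their preferences as weights shaping $g$, subject to the same integrated constraint.

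The dynamic half of the claim---that rankings change as the world changes---would follow from the Uncertainty Principle conjecture stated earlier: once the random-pool rule is publicly adopted, no single ``best school'' generalization can simultaneously remain popular and keep yielding accurate predictions, so the self-fulfilling prophecy that produces stable top-rank concentration is dissolved. The main obstacle is pinning down the welfare functional; unlike the news-vendor's unambiguous monetary profit, the payoff of an admissions decision mixes academic performance, equity of access, long-run labor outcomes, and institutional prestige, and the direction of the inequality in Theorem \ref{Theorem:The-profits-in} is sensitive to which of these one adopts as the $p,w,r,s$ analog. A secondary obstacle is incentive compatibility: Theorem \ref{Theorem:The-profits-in} is a single-agent optimality statement, whereas the admissions claim concerns a system of competing schools, so showing that each school weakly prefers the randomized rule over a unilateral deviation to deterministic ranking requires a game-theoretic extension not covered by the present excerpt, and is where I would expect most of the technical effort to go.
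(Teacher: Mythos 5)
The statement you are asked to prove is labeled a \emph{conjecture} in the paper, and the paper supplies no proof of it: it appears in the ``Discussion of Other Applications'' section, where the author explicitly defers ``a complete development of each sub-topic'' to subsequent papers. So there is no proof to match yours against, and your honest framing of the attempt as a research program rather than a derivation is the right instinct. Your mapping of the news-vendor machinery onto admissions (latent fit as demand, offers as order quantity, a compound $\hat{F}$ from noisy composite scores, Theorem \ref{Theorem:The-profits-in} as the engine) is a faithful transposition of the paper's central analogy, and the two obstacles you name --- the ambiguity of the welfare functional standing in for $(p,w,s,r)$, and the multi-agent incentive problem that a single-decision-maker optimality statement cannot address --- are exactly the reasons the claim cannot currently be more than a conjecture.

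That said, two gaps in your sketch deserve naming beyond what you flag yourself. First, Theorem \ref{Theorem:The-profits-in} is only a \emph{conditional} sufficiency result: it says $E\left[\pi_{RS}\right]\geq\hat{\pi_{R}^{*}}$ \emph{provided} the inequality $E\left[Q\right]\left(1-\frac{w}{p}\right)+E\left[D\right]-E\left[\max\left(Q,D\right)\right]\geq\int_{0}^{\hat{Q^{*}}}t\hat{f}\left(t\right)dt$ holds, and the paper itself concedes (in the remarks after Proposition \ref{prop:If-the-expected} and in the limitations section) that exhibiting distributions satisfying it can be laborious or infeasible. Even a perfect translation therefore yields ``randomized admissions are weakly better \emph{if} a parameter restriction is met,'' which is strictly weaker than the conjecture's ``this will ensure'' language. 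Second, your argument for the dynamic half of the claim (rankings will change) rests on the Uncertainty Principle of the Social Sciences, which is itself an unproven conjecture in the same paper; deriving one conjecture from another does not close the loop. Neither gap is a defect relative to the paper --- the paper does not close them either --- but you should present the result as a conditional, model-dependent plausibility argument rather than as something Theorem \ref{Theorem:The-profits-in} delivers ``essentially verbatim.''
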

\begin{doublespace}
For simplicity, the discussion above does not distinguish between
admission to universities for undergraduate education and for admission
into elementary and high schools, since the same solution can apply
to all these cases with minor adjustments. Postgraduate education
can also use these methods, but a greater level of restrictions needs
to be placed, to match the preferences of candidates regarding the
desired specialized field of study and the strengths of the higher-level
educational institution. Of course, certain exceptions can always
be made, wherein, really deserving candidates and truly exceptional
students can be granted admission, without subjecting them to the
above procedure.
\end{doublespace}
\begin{doublespace}

\subsection{Job Candidate Selection}
\end{doublespace}

\begin{doublespace}
Admissions to schools is still a much more open and fair process when
compared to the job market, where cronyism, connections and catchy
phrases written on resumes, which manage to get the attention of hiring
managers, seem to be the key. A random process, fitting with the central
theme of our discussion, could select the successful candidate from
the pool of applicants meeting certain benchmark criteria. Of course,
human resource personnel, recruiters and hiring managers, who pride
themselves on finding the perfect fit for their organization, would
voice serious concerns. We are not looking to malign or undermine
that effort. But would it really make such a big difference how perfect
the candidate as long as he is not a drug addict or a violent criminal
and he can learn the skills required to perform the job once hired?
Human resources can then focus on training programs and other means
of actually enhancing productivity.
\end{doublespace}
\begin{doublespace}

\subsection{Journal Submissions}
\end{doublespace}

\begin{doublespace}
A topic that would perhaps determine the fate of this paper and others
we plan to write developing these topics is about the current selection
process at the top journals. But when papers such as these are being
written, {[}the titles of these papers are quite sufficient to understand
the nature of the issues they address: (Zivney \& Bertin \cite{key-28}),
``Publish or perish: What the competition is really doing''; (Seglen
\cite{key-29}), ``Why the impact factor of journals should not be
used for evaluating research''; (Choi \cite{key-30}), ``How to
publish in top journals''; (Lawrence \cite{key-31}), ``The politics
of publication''; (De Rond \& Miller \cite{key-32}), ``Publish
or Perish: Bane or Boon of Academic Life?''{]}, it is a warning sign
that maximum efforts are not being spent on creating the best work,
but a lot of work is spent in getting published at the top outlets. 

With no offense to any of the editors and reviewers who spend countless
hours finding the most suitable papers for their journals. Would it
not be easier and better to randomly select good papers from the overall
pool of submissions and coach the authors so that the papers can develop
to become excellent papers? Different journals could even collaborate
to select papers from the submission pool in this way. This will perhaps
also ensure that papers will not just be judged merely by where they
are published but a deeper evaluation will be done by future authors
to determine what previous work is helpful for them. It can be argued
that such a thorough assessment will lead to better fulfillment of
one of the fundamental goals of research and the pursuit of knowledge,
which is to promote better decision making. No doubt, the problem
here is not that severe since the best works do bubble up to the surface
over time and researchers are careful in selecting the works they
deem beneficial for further studying or enhancing. We merely include
this example as another instance where introducing uncertainty in
the decision process can yield better outcomes over time.
\end{doublespace}
\begin{doublespace}

\subsection{Stock Picking}
\end{doublespace}

\begin{doublespace}
No discussion on uncertainty in the social sciences is complete without
talking about the financial markets. The amount of effort devoted
to security selection and chasing higher returns is colossal. A simple
mechanism could pick a pool of securities meeting certain criteria
and randomly allocate wealth to a smaller percentage of the overall
pool. This would acknowledge the randomness (Taleb \cite{key-33})
inherent in the outcomes of star portfolio managers and re-channel
efforts from financial management to other more worthy endeavors.
\end{doublespace}
\begin{doublespace}

\subsection{Capitol Hill Baby Sitting Crisis and Monetary Policy Management}
\end{doublespace}

\begin{doublespace}
Lastly, we consider, perhaps, one of the most elegant, simple and
beautiful examples of uncertainty in social systems. The (Sweeney
\& Sweeney \cite{key-34}) anecdote about the Capitol Hill baby-sitting
crisis, exposits the mechanics of inflation, setting interest rates
and monetary policies required to police the optimum amount of money.
The creation of a monetary crisis in a small, simple environment of
good-hearted people, expounds that even with near ideal conditions,
things can become messy; then in a large labyrinthine atmosphere,
disaster could be brewing without getting noticed and can strike without
much premonition. 

The primary emphasis should be on the need to keep complexity at bay
and establishing an ambience, where repeated games can be played with
public transparency, so that guileful practices can be curtailed.
A solution based on the central theme of our paper, could be to issue
lotteries and award scrips (fiat money that could be exchanged in
lieu of baby sitting hours) to the residents when there is a recession
(too many baby sitters and not enough baby sitting jobs or money to
be made) and expire a certain amount of scrips either periodically
or even better on randomly selected dates when there are inflationary
pressures (too much money or scrip chasing too few good baby sitters).
The simplicity of our solution is a stark contrast to the elegant
yet complex approach, filled with assumptions, outlined in (Hens,
Schenk-Hoppé \& Vogt \cite{key-35}). A detailed comparison and related
treatments are postponed to another time.
\end{doublespace}
\begin{doublespace}

\section{Limitations and Scope for Further Research}
\end{doublespace}

\begin{doublespace}
In the interest of brevity, we have only provided initial results
for the inventory management case and briefly discussed other examples. 
\end{doublespace}
\begin{itemize}
\begin{doublespace}
\item The inventory management problem can be extended to consider example
distributions and estimation of parameters that can satisfy the inequalities
derived (Brunk \cite{key-36}; Parzen \cite{key-37}; Nickalls \cite{key-38};
Jaakkola \& Jordan \cite{key-39}; Gibbs \& Su \cite{key-40}; Weisstein
\cite{key-41}). The hunt for distributions for the quantity to order
can be quite laborious and could involve both discrete and continuous
varieties. When no solutions exist or when solutions are not easily
obtained, a limited number of nearly optimal states need to be identified
and their profit potential ascertained, keeping in mind the extent
of maximum losses that could be incurred.
\item The main result in Theorem \ref{Theorem:The-profits-in} is simplistic
in nature, but is mostly meant to illustrate that our approach, where
the quantity ordered is stochastic, holds promise, and further research
in this direction can be rewarding. 
\item Also for simplicity, we have assumed independence between the demand
process and the stochastic quantity ordered. The number of feasible
solutions in terms of the distribution parameters for the stochastic
quantity might be limited or not be easily obtained in this case.
We can assume covariance between these two random variables to obtain
more valid solutions. We might expect that when the two variables
are positively correlated, the output, in terms of profits, might
be more beneficial. These results, need to be established theoretically
and also empirically verified, to provide guidance on the choice of
distributions for the quantity to be ordered and the extent of correlation
with the demand process.
\item A deeper analysis could yield a useful set of comparisons, that establish
bounds on the difference in profits under the true demand distribution,
the estimated demand distribution (a good example of which is the
empirical distribution function) and the scenario when the quantity
ordered is stochastic. Such bounds under both finite sample situations
and asymptotically, could shed light on the criteria when the stochastic
quantity would be a beneficial choice.
\item A realistic assumption would be that extreme demand situations or
outliers would be underrepresented in a historical time series of
observed demand. It would be interesting to understand the behavior
of the profits when the variance of the demand distribution varies
and when outliers occur. The choice of the distribution for the stochastic
quantity to order should aim to push the system into regions where,
despite any errors in estimation, profits are not significantly affected
and if possible even benefit from the fluctuations of the demand process.
\item Different utility functions or mean variance based optimization techniques
can also be introduced into this framework.
\item We are working on creating subsequent papers to significantly enhance
the results for inventory management, to develop theoretical platforms
and test empirically the other case studies mentioned here.
\end{doublespace}
\end{itemize}
\begin{doublespace}
The dynamic nature of any social-science system, where changes can
be observed and decisions can be made by participants to influence
the system, means that the limited predictive ability of any awareness
will necessitate periodic reviews and the prescription of corrective
programs. It would be interesting to see how participants modify their
actions once randomness is introduced in the decision making process.
Where there is uncertainty, there will be unintended consequences,
which might be welcome or hazardous (Kashyap \cite{key-41-1}). Of
interest would be to see whether such an approach will reduce the
sense of entitlement prevalent in society or whether it would lead
to greater complacency.
\end{doublespace}
\begin{doublespace}

\section{Conclusion}
\end{doublespace}

\begin{doublespace}
We have developed a framework using the inventory management example,
to illustrate how we can overcome uncertainty with uncertainty. Using
randomness in our choices and in what we control, and hence in the
decision making process, could potentially offset the uncertainty
inherent in the environment and yield better outcomes. Such an approach,
while seemingly absurd, is the ideal medicine for the even greater
absurdity in the decision making process, that has become prevalent
in today's society. It takes care of the issues that crop up due to
the limitations in our understanding of complex systems and the widely
acknowledged problem that most of our measurements are highly prone
to errors.

The central innovation can be understood as optimizing to get an interval
of adequate performance and randomizing over that interval. The goal
is not just to optimize, but to identify a region of optimal performance
and set the parameters of a probability distribution over that region
and sample from it to provide a satisfactory level of performance.
Agents looking to optimize will randomize over an optimal region,
hence this approach is called \textquotedbl{}Randoptimization\textquotedbl{}.
Due to measurement errors and other uncertainty, we can never be certain
of any optimization we perform, rather it is better to randomize over
acceptable states.

We have discussed areas where such an approach might be suited, with
the common prescription, ``Don't Simply Optimize, Also Randomize;
perhaps best described by the term - Randoptimization''.
\end{doublespace}
\begin{enumerate}
\begin{doublespace}
\item News-vendor Inventory Management Problem
\item School Admissions
\item Journal Submissions
\item Job Candidate Selection
\item Stock Picking
\item Monetary Policy
\end{doublespace}
\end{enumerate}

\begin{doublespace}

\section{Appendix}
\end{doublespace}
\begin{doublespace}

\subsection{\label{subsec:Proof-of-Proposition-1}Proof of Lemma \ref{prop:The-variance-of}}
\end{doublespace}
\begin{proof}
\begin{doublespace}
The proof is standard and well known; but we provide it because of
its central importance to this paper and for completeness. Consider
the profit function of the benchmark model,
\[
\pi_{R}=\min\left(Q,D\right)p-Qw
\]
\[
\pi_{R}=\left\{ Q-\left(Q-D\right)^{+}\right\} p-Qw
\]
\begin{eqnarray*}
\left\{ \vphantom{\int_{0}^{Q}}\because\;\min\left(Q,D\right)\right.=\left\{ Q-\left(Q-D\right)^{+}\right\}  & ; & \text{If }\left(Q<D\right);\left\{ Q-\left(Q-D\right)^{+}\right\} =Q-0=\min\left(Q,D\right)\\
\text{If }\left(Q\geq D\right);\left\{ Q-\left(Q-D\right)^{+}\right\} =Q-\left(Q-D\right)=D=\min\left(Q,D\right) &  & \left.\vphantom{\int_{0}^{Q}}\right\} 
\end{eqnarray*}
\[
\pi_{R}=\left(p-w\right)Q-p\left(Q-D\right)^{+}
\]
\[
E\left(\pi_{R}\right)=E_{D}\left[\left(p-w\right)Q-p\left(Q-D\right)^{+}\right]
\]
\[
E\left(\pi_{R}\right)=\left(p-w\right)Q-p\int_{0}^{Q}F\left(t\right)dt
\]
\begin{eqnarray*}
\left\{ \vphantom{\int_{0}^{Q}}\because\;E_{D}\left[p\left(Q-D\right)^{+}\right]\right. & = & p\int_{0}^{Q}\left(Q-t\right)f\left(t\right)dt=pQ\int_{0}^{Q}f\left(t\right)dt-p\int_{0}^{Q}tf\left(t\right)dt\quad;\\
 &  & \text{Integrating by Parts, using, }\int u\,dv=uv-\int v\,du\,,\\
\quad p\int_{0}^{Q}tf\left(t\right)dt & = & pQ\int_{0}^{Q}f\left(t\right)dt-p\int_{0}^{Q}F\left(t\right)dt\left.\quad;\;pQ\int_{0}^{Q}f\left(t\right)dt=pQF\left(Q\right)\vphantom{\int_{0}^{Q}}\right\} 
\end{eqnarray*}
\[
Var\left(\pi_{R}\right)=E\left[\pi_{R}-E\left(\pi_{R}\right)\right]^{2}=E\left[\left(\pi_{R}\right)^{2}\right]-\left[E\left(\pi_{R}\right)\right]^{2}
\]
\begin{eqnarray*}
Var\left(\pi_{R}\right) & = & E_{D}\left[\left(p-w\right)^{2}Q^{2}+p^{2}\left\{ \left(Q-D\right)^{+}\right\} ^{2}-2\left\{ \left(p-w\right)pQ\left(Q-D\right)^{+}\right\} \right]\\
 &  & -\left[\left(p-w\right)^{2}Q^{2}+p^{2}\left\{ \int_{0}^{Q}F\left(t\right)dt\right\} ^{2}-2\left\{ \left(p-w\right)pQ\int_{0}^{Q}F\left(t\right)dt\right\} \right]
\end{eqnarray*}
\begin{eqnarray*}
Var\left(\pi_{R}\right) & = & E_{D}\left[p^{2}\left\{ \left(Q-D\right)^{+}\right\} ^{2}\right]-p^{2}\left\{ \int_{0}^{Q}F\left(t\right)dt\right\} ^{2}
\end{eqnarray*}
\begin{eqnarray*}
Var\left(\pi_{R}\right) & = & p^{2}\left[\int_{0}^{Q}2\left(Q-t\right)F\left(t\right)dt-\left\{ \int_{0}^{Q}F\left(t\right)dt\right\} ^{2}\right]
\end{eqnarray*}
\begin{eqnarray*}
\left\{ \vphantom{\int_{0}^{Q}}\because\;E_{D}\left[\left\{ \left(Q-D\right)^{+}\right\} ^{2}\right]\right. & = & \int_{0}^{Q}\left(Q-t\right)^{2}f\left(t\right)dt\quad,\;\text{Integrate this by Parts}\\
 & = & \left.\left|\left(Q-t\right)^{2}F\left(t\right)\right|_{0}^{Q}-\int_{0}^{Q}2\left(Q-t\right)\left(-1\right)F\left(t\right)dt\quad\right\} 
\end{eqnarray*}
\end{doublespace}
\end{proof}
\begin{doublespace}

\subsection{\label{subsec:Proof-of-Proposition}Proof of Lemma \ref{prop:The-optimal-quantity}}
\end{doublespace}
\begin{proof}
\begin{doublespace}
The proof is standard and well known; but we provide it because of
its central importance to this paper and for completeness. Consider
maximizing the profit function of the benchmark model,\textbf{ 
\begin{eqnarray*}
\pi_{R}^{*} & = & \underset{Q}{\max}\quad E_{D}\left[\min\left(Q,D\right)p-Qw\right]\\
 & = & \max\:\left[\int_{0}^{Q}ptf\left(t\right)dt+\int_{Q}^{\infty}Qpf\left(t\right)dt-Qw\right]\\
\left\{ \phantom{\left[\int_{Q}^{P}\right]}\text{we write this instead of}\right. &  & \max\left.\left[\int_{0}^{Q}pdf\left(d\right)dd+\int_{Q}^{\infty}Qpf\left(d\right)dd-Qw\right]\right\} 
\end{eqnarray*}
}First Order Conditions using Leibniz Integral Rule gives,
\begin{eqnarray*}
\frac{\partial\left[\int_{0}^{Q}ptf\left(t\right)dt+\int_{Q}^{\infty}Qpf\left(t\right)dt-Qw\right]}{\partial Q} & = & 0\\
pQf\left(Q\right)+\int_{Q}^{\infty}pf\left(t\right)dt-Qpf\left(Q\right)-w & = & 0\\
\Rightarrow\left\{ F\left(\infty\right)-F\left(Q\right)\right\}  & = & \frac{w}{p}\\
F\left(Q\right) & = & 1-\frac{w}{p}
\end{eqnarray*}
This gives the optimal quantity to order for the retailer as,\textbf{
\[
Q^{*}=F^{-1}\left(1-\frac{w}{p}\right)
\]
}The optimal expected profits of the retailer would then be
\[
\pi_{R}^{*}=E_{D}\left[\min\left(Q^{*},D\right)p-Q^{*}w\right]
\]
\[
\pi_{R}^{*}=p\int_{0}^{Q^{*}}tf\left(t\right)dt+\int_{Q^{*}}^{\infty}Q^{*}pf\left(t\right)dt-Q^{*}w
\]
\[
\pi_{R}^{*}=p\int_{0}^{Q^{*}}tf\left(t\right)dt+Q^{*}p\left\{ 1-\int_{0}^{Q^{*}}f\left(t\right)dt\right\} -Q^{*}w
\]
\[
\pi_{R}^{*}=\int_{0}^{Q^{*}}ptf\left(t\right)dt+Q^{*}p\left\{ 1-F\left(Q^{*}\right)\right\} -Q^{*}w
\]
\begin{eqnarray*}
\pi_{R}^{*} & = & \int_{0}^{Q^{*}}ptf\left(t\right)dt+Q^{*}w-Q^{*}w\\
\pi_{R}^{*} & = & p\int_{0}^{Q^{*}}tf\left(t\right)dt
\end{eqnarray*}
This can also be written as,
\[
\pi_{R}^{*}=p\int_{0}^{Q^{*}}tf\left(t\right)dt-Q^{*}p\int_{0}^{Q^{*}}f\left(t\right)dt+Q^{*}p-Q^{*}w
\]
\[
\pi_{R}^{*}=p\int_{0}^{Q^{*}}tf\left(t\right)dt-Q^{*}pF\left(Q^{*}\right)+Q^{*}p-Q^{*}w
\]
\textbf{
\begin{eqnarray*}
\pi_{R}^{*} & = & Q^{*}\left(p-w\right)-p\int_{0}^{Q^{*}}F\left(t\right)dt
\end{eqnarray*}
}
\begin{eqnarray*}
\left\{ \because\quad\int_{0}^{Q^{*}}tf\left(t\right)dt\right. & = & Q^{*}F\left(Q^{*}\right)-\int_{0}^{Q^{*}}F\left(t\right)dt\left.\quad,\;\text{Using Integration by parts}\vphantom{\int_{0}^{Q^{*}}}\right\} 
\end{eqnarray*}
\textbf{
\begin{eqnarray*}
\pi_{R}^{*} & = & p\int_{0}^{Q^{*}}\left\{ 1-F\left(t\right)\right\} dt-Q^{*}w
\end{eqnarray*}
\begin{eqnarray*}
\pi_{R}^{*} & = & p\int_{0}^{Q^{*}}\left\{ 1-P\left(D\leq t\right)\right\} dt-Q^{*}w
\end{eqnarray*}
\begin{eqnarray*}
\pi_{R}^{*} & = & p\int_{0}^{Q^{*}}P\left(D>t\right)dt-Q^{*}w
\end{eqnarray*}
}

The variance of the profits when the optimal quantity is ordered is
given by,
\begin{eqnarray*}
Var\left(\pi_{R}^{*}\right) & = & p^{2}\left[\int_{0}^{Q^{*}}2\left(Q^{*}-t\right)F\left(t\right)dt-\left\{ \int_{0}^{Q^{*}}F\left(t\right)dt\right\} ^{2}\right]
\end{eqnarray*}
\begin{eqnarray*}
 & = & p^{2}\left[2Q^{*}\int_{0}^{Q^{*}}F\left(t\right)dt-2\int_{0}^{Q^{*}}tF\left(t\right)dt-\left\{ \int_{0}^{Q^{*}}F\left(t\right)dt\right\} ^{2}\right]
\end{eqnarray*}
\begin{eqnarray*}
 & = & p^{2}\left[\int_{0}^{Q^{*}}F\left(t\right)dt\left\{ 2Q^{*}-\left(\int_{0}^{Q^{*}}F\left(t\right)dt\right)\right\} -2\int_{0}^{Q^{*}}tF\left(t\right)dt\right]
\end{eqnarray*}
\begin{eqnarray*}
 & = & p^{2}\left[\left\{ Q^{*}F\left(Q^{*}\right)-\int_{0}^{Q^{*}}tf\left(t\right)dt\right\} \left\{ 2Q^{*}-\left(Q^{*}F\left(Q^{*}\right)-\int_{0}^{Q^{*}}tf\left(t\right)dt\right)\right\} -2\int_{0}^{Q^{*}}tF\left(t\right)dt\right]
\end{eqnarray*}
\begin{eqnarray*}
 & = & p^{2}\left[\left\{ Q^{*}\left\{ 1-\frac{w}{p}\right\} -\int_{0}^{Q^{*}}tf\left(t\right)dt\right\} \left\{ Q^{*}\left\{ 1+\frac{w}{p}\right\} +\int_{0}^{Q^{*}}tf\left(t\right)dt\right\} -2\int_{0}^{Q^{*}}tF\left(t\right)dt\right]
\end{eqnarray*}
\begin{eqnarray*}
Var\left(\pi_{R}^{*}\right) & = & p^{2}\left[\left(Q^{*}\right)^{2}\left\{ 1-\left(\frac{w}{p}\right)^{2}\right\} -\left(\int_{0}^{Q^{*}}tf\left(t\right)dt\right)^{2}-2Q^{*}\frac{w}{p}\int_{0}^{Q^{*}}tf\left(t\right)dt-2\int_{0}^{Q^{*}}tF\left(t\right)dt\right]
\end{eqnarray*}
\end{doublespace}
\end{proof}
\begin{doublespace}

\subsection{\label{subsec:Proof-of-Theorem}Proof of Theorem \ref{Theorem:The-profits-in}}
\end{doublespace}
\begin{proof}
\begin{doublespace}
Consider the expected profit function in the stochastic quantity ordered
case,\textbf{
\begin{eqnarray*}
E\left[\pi_{RS}\right] & = & E_{Q,D}\left[\min\left(Q,D\right)p-Qw\right]\\
 & = & p\int_{0}^{\infty}\int_{0}^{\infty}\min\left(x,y\right)\hat{f}_{X,Y}\left(x,y\right)dxdy-E\left[Q\right]w
\end{eqnarray*}
}
\begin{eqnarray*}
 & = & p\int_{0}^{\infty}\int_{0}^{y}xg\left(x\right)\hat{f}\left(y\right)dxdy+p\int_{0}^{\infty}\int_{y}^{\infty}yg\left(x\right)\hat{f}\left(y\right)dxdy-E\left[Q\right]w
\end{eqnarray*}
\begin{eqnarray*}
 & = & p\int_{0}^{\infty}\int_{x}^{\infty}xg\left(x\right)\hat{f}\left(y\right)dydx+p\int_{0}^{\infty}\int_{y}^{\infty}yg\left(x\right)\hat{f}\left(y\right)dxdy-E\left[Q\right]w
\end{eqnarray*}
\begin{eqnarray*}
 & = & p\int_{0}^{\infty}xg\left(x\right)\left\{ \int_{x}^{\infty}\hat{f}\left(y\right)dy\right\} dx+p\int_{0}^{\infty}y\hat{f}\left(y\right)\left\{ \int_{y}^{\infty}g\left(x\right)dx\right\} dy-E\left[Q\right]w
\end{eqnarray*}
\begin{eqnarray*}
 & = & p\int_{0}^{\infty}xg\left(x\right)\left\{ 1-\int_{0}^{x}\hat{f}\left(y\right)dy\right\} dx+p\int_{0}^{\infty}y\hat{f}\left(y\right)\left\{ 1-\int_{0}^{y}g\left(x\right)dx\right\} dy-E\left[Q\right]w
\end{eqnarray*}
\begin{eqnarray*}
 & = & p\int_{0}^{\infty}xg\left(x\right)\left\{ 1-\hat{F}\left(x\right)\right\} dx+p\int_{0}^{\infty}y\hat{f}\left(y\right)\left\{ 1-G\left(y\right)\right\} dy-E\left[Q\right]w
\end{eqnarray*}
\begin{eqnarray*}
 & = & p\int_{0}^{\infty}xg\left(x\right)dx-p\int_{0}^{\infty}xg\left(x\right)\hat{F}\left(x\right)dx+p\int_{0}^{\infty}y\hat{f}\left(y\right)dy-p\int_{0}^{\infty}y\hat{f}\left(y\right)G\left(y\right)dy-E\left[Q\right]w
\end{eqnarray*}
\begin{eqnarray*}
 & = & p\int_{0}^{\infty}xg\left(x\right)dx+p\int_{0}^{\infty}y\hat{f}\left(y\right)dy-p\int_{0}^{\infty}xg\left(x\right)\hat{F}\left(x\right)dx-p\int_{0}^{\infty}y\hat{f}\left(y\right)G\left(y\right)dy-E\left[Q\right]w
\end{eqnarray*}
We note that,

\[
E\left[\max\left(X,Y\right)\right]=\int_{0}^{\infty}\int_{0}^{\infty}\max\left(x,y\right)\hat{f}_{X,Y}\left(x,y\right)dxdy
\]
\[
=\int_{0}^{\infty}\int_{0}^{x}xg\left(x\right)\hat{f}\left(y\right)dydx+\int_{0}^{\infty}\int_{0}^{y}yg\left(x\right)\hat{f}\left(y\right)dxdy
\]
\[
=\int_{0}^{\infty}xg\left(x\right)\left\{ \int_{0}^{x}\hat{f}\left(y\right)dy\right\} dx+\int_{0}^{\infty}y\hat{f}\left(y\right)\left\{ \int_{0}^{y}g\left(x\right)dx\right\} dy
\]
\[
=\int_{0}^{\infty}xg\left(x\right)\hat{F}\left(x\right)dx+\int_{0}^{\infty}y\hat{f}\left(y\right)G\left(y\right)dy
\]
This gives,
\[
E\left[\pi_{RS}\right]=pE\left[Q\right]+pE\left[D\right]-pE\left[\max\left(Q,D\right)\right]-E\left[Q\right]w
\]
Alternately, noting that the original specification of the inventory
problem has a $\min\left(Q,D\right)$ we can get the above by the
following relationship between the minimum and maximum,
\[
\min\left({X,Y}\right)+\max\left({X,Y}\right)=X+Y
\]
For better outcomes under the stochastic quantity ordered scenario,
we must have, 
\[
E\left[\pi_{RS}\right]\geq\hat{\pi_{R}^{*}}
\]
\[
pE\left[Q\right]+pE\left[D\right]-pE\left[\max\left(Q,D\right)\right]-E\left[Q\right]w\quad\geq\quad p\int_{0}^{\hat{Q^{*}}}t\hat{f}\left(t\right)dt
\]
\[
E\left[Q\right]\left\{ 1-\frac{w}{p}\right\} +E\left[D\right]-E\left[\max\left(Q,D\right)\right]\quad\geq\quad\int_{0}^{\hat{Q^{*}}}t\hat{f}\left(t\right)dt
\]
\end{doublespace}
\end{proof}
\begin{doublespace}

\subsection{\label{subsec:Proof-of-Proposition-2}Proof of Proposition \ref{prop:If-the-expected}}
\end{doublespace}
\begin{proof}
\begin{doublespace}
From the criteria in Theorem \ref{Theorem:The-profits-in}, the result
is immediate
\[
E\left[\pi_{RS}\right]\geq\hat{\pi_{R}^{*}}
\]
 
\[
E_{Q,D}\left[\min\left(Q,D\right)p-Qw\right]\geq E_{D}\left[\min\left(\hat{Q^{*}},D\right)p-\hat{Q^{*}}w\right]
\]
\[
E_{Q,D}\left[Q+D-\max\left(Q,D\right)p-Qw\right]\geq E_{D}\left[\hat{Q^{*}}+D-\max\left(\hat{Q^{*}},D\right)p-\hat{Q^{*}}w\right]
\]
\[
E_{Q,D}\left[\max\left(Q,D\right)\right]\leq E_{D}\left[\max\left(\hat{Q^{*}},D\right)\right]
\]
\[
\int_{0}^{\infty}xg\left(x\right)\hat{F}\left(x\right)dx+\int_{0}^{\infty}y\hat{f}\left(y\right)G\left(y\right)dy\leq\int_{0}^{\hat{Q^{*}}}\hat{Q^{*}}\hat{f}\left(t\right)dt+\int_{\hat{Q^{*}}}^{\infty}t\hat{f}\left(t\right)dt
\]
\[
\int_{0}^{\infty}xg\left(x\right)\hat{F}\left(x\right)dx+\int_{0}^{\infty}y\hat{f}\left(y\right)G\left(y\right)dy\leq\hat{Q^{*}}\hat{F}\left(\hat{Q^{*}}\right)+\int_{\hat{Q^{*}}}^{\infty}t\hat{f}\left(t\right)dt
\]
\end{doublespace}
\end{proof}

\end{document}